\title{On the intersection graph of the disks with diameters the sides of a convex $n$-gon}
\authorrunning{Herrera and P\'erez-Lantero}
\author{
  		Luis H. Herrera\inst{1}
	\and
  		Pablo P\'erez-Lantero\inst{1}
}
\institute{
		Dept.\ de Matem\'atica y Ciencia de la Computaci\'on, Universidad de Santiago, Chile.\\
        \email{\{luis.herrera.b,pablo.perez.l\}@usach.cl}
}
\begin{document}
\maketitle

\begin{abstract}
Given a convex $n$-gon, we can draw $n$ disks (called side disks) where each disk has a different side of the polygon as diameter and the midpoint of the side as its center. The intersection graph of such disks is the undirected graph with vertices the $n$ disks and two disks are adjacent if and only if they have a point in common. Such a graph was introduced by Huemer and P\'erez-Lantero in 2016, proved to be planar and Hamiltonian. In this paper we study further combinatorial properties of this graph. We prove that the treewidth is at most 3, by showing an $O(n)$-time algorithm that builds a tree decomposition of width at most 3, given the polygon as input. This implies that we can construct the intersection graph of the side disks in $O(n)$ time. We further study the independence number of this graph, which is the maximum number of pairwise disjoint disks. The planarity condition implies that for every convex $n$-gon we can select at least $\lceil n/4 \rceil$ pairwise disjoint disks, and we prove that for every $n\ge 3$ there exist convex $n$-gons in which we cannot select more than this number. Finally, we show that our class of graphs includes all outerplanar Hamiltonian graphs except the cycle of length four, and that it is a proper subclass of the planar Hamiltonian graphs.  
\end{abstract}

\section{Introduction}

Let $P$ be a convex polygon of $n$ sides denoted $s_0,s_1,\ldots,s_{n-1}$ counter-clockwise. For $i\in[0\ldotp\ldotp n-1]$, let $D_i$ denote the disk with diameter the length of $s_i$ and center the midpoint of $s_i$. Since $D_i$ is constructed on the side $s_i$ of $P$, we say that $D_i$ is a {\em side disk} of $P$. Let $G(P)$ denote the intersection graph of $\mathcal{D}=\mathcal{D}(P)=\{D_0,D_1,\ldots,D_{n-1}\}$. Given a finite set $\mathcal{D}$ of disks in the plane, the {\em intersection graph} of $\mathcal{D}$ is the undirected graph $G(\mathcal{D})=(V,E)$, where $V=\mathcal{D}$ and $\{D,D'\}\in E$ if and only if the intersection of $D$ and $D'$ is not empty. See Figure~\ref{fig:example}. In this paper, all graphs are undirected. We say that a graph $G$ is {\em side-disk realizable} if there exists a convex polygon $P$ such that $G\simeq G(P)$. For two graphs $G$ and $H$, we write $G\simeq H$ to denote that $G$ and $H$ are isomorphic.

Given a graph $G=(V,E)$, a {\em tree decomposition} of $G$ is a tree $T$ of nodes $X_1,X_2,\ldots,X_N\subset V$, which satisfies the following three properties:
\begin{itemize}
    \item[(1)] $X_1\cup X_2\cup \dots \cup X_N=V$ (i.e.\ each $v\in V$ is contained in at least one node).
    \item[(2)] For every edge $\{v,w\}\in E$, there exists a node $X_i$ such that $v,w\in X_i$.
    \item[(3)] If nodes $X_i$ and $X_k$ contain a vertex $v\in V$, then all nodes in the 
    shortest path that connects $X_i$ and $X_k$ contain $v$ as well. (i.e.\ the nodes containing $v$ induce 
    a connected subgraph of $T$).
\end{itemize}
The {\em width} of $T$ is $\max_{i\in[1\ldotp\ldotp N]}|X_i|-1$, and the {\em treewidth} of $G$, denoted $tw(G)$, is the minimum width over all tree decompositions of $G$. 

Given an undirected graph $G=(V,E)$, a subset $V'\subseteq V$ is an {\em independent set} of $G$ if no pair of vertices in $V'$ define an edge in $E$. The set $V'$ is a {\em maximum independent set} (MIS) if $V'$ is an independent set of maximum cardinality among all independent sets of $G$. Let $\alpha(G)$ denote the {\em independence number} of $G$, which is the cardinality of a MIS of $G$. Then, any independent set of $G(\mathcal{D})$ is a subset of pairwise disjoint disks of $\mathcal{D}$. 


\begin{figure}[t]
	\centering
	\includegraphics[scale=0.7,page=7]{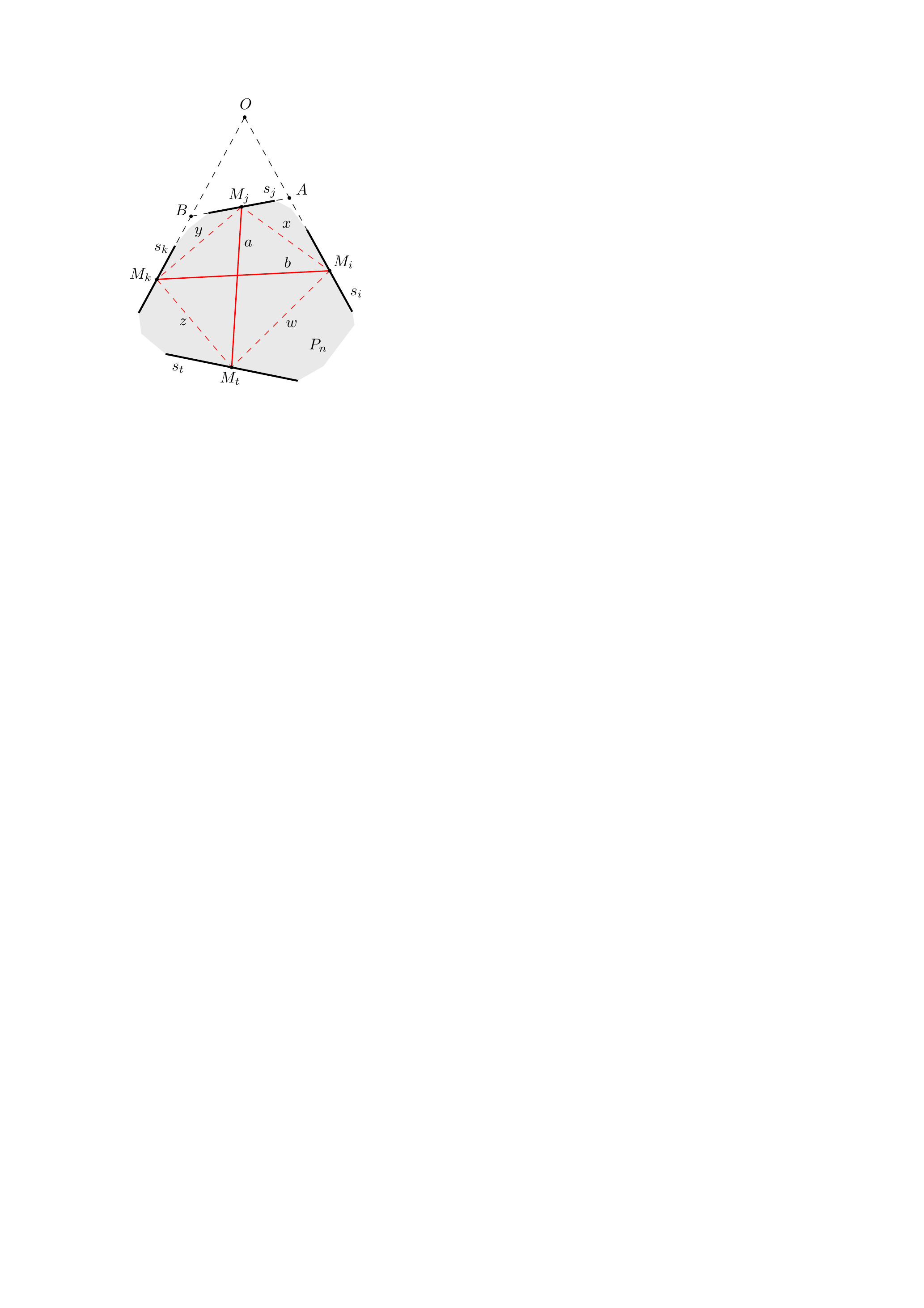}
	\caption{\small{Left: An hexagon of sides $s_0,\ldots,s_5$. Right: The intersection graph
	of the side disks $D_0,\ldots,D_5$.}}
	\label{fig:example}
\end{figure}

We study the graph $G(P)$, which was introduced by Huemer and P\'erez-Lantero~\cite{huemer2016}, motivated by the question whether in any convex pentagon there is a pair of disjoint side disks. The graph $G(P)$ is Hamiltonian, with the Hamiltonian cycle $(D_0,D_1,\ldots,D_{n-1},D_0)$, and they proved that for every convex polygon $P$ of $n\ge 3$ sides, $G(P)$ is also planar. Our results are the following ones:

\begin{enumerate}
    \setlength\itemsep{1em}
    
    \item We present an $O(n)$-time algorithm that receives as input the polygon $P$, and 
    returns a tree decomposition of $G(P)$ of width at most 3. This implies $tw(G(P))\le 3$. Using
    this tree decomposition, we can identify all edges of $G(P)$ in $O(n)$ time as well
    (Section~\ref{sec:treewidth}).
    
    \item The famous 4-color theorem implies $\alpha(G(P))\ge \lceil n/4\rceil$. We prove 
    that this bound is tight: For every $n\ge 3$, there exist convex polygons $P$ of $n$ 
    sides for which $\alpha(G(P))=\lceil n/4\rceil$
    (Section~\ref{sec:size-MIS}).
    
    \item We prove that all outerplanar Hamiltonian graphs, except the cycle of length four,
    are side-disk realizable,
    there exist convex polygons $P$ such that $G(P)$ is not outerplanar, and 
    there exist planar Hamiltonian graphs which are not side-disk realizable.
    Then, the class of the intersection graphs of the side disks is a 
    proper subclass of the planar Hamiltonian graphs
    (Section~\ref{sec:classes}).
\end{enumerate}

It is known that a graph has treewidth at most 2 if and only if it does not contain the complete graph $K_4$ as a minor~\cite{bodlaender1994tourist}. Since there exist convex polygons $P$ such that $G(P)$ contains $K_4$ as a minor (see for example Section~\ref{sec:size-MIS}, or Figure~\ref{fig:example}), the bound $tw(G(P))\le 3$ is tight. That is, such polygons $P$ satisfy $tw(G(P))=3$. More importantly, the graph $G(P)$ is planar and Hamiltonian, and in general planar Hamiltonian graphs can have unbounded treewidth. For example, the $k\times k$ grid graph of $n=k^2$ vertices, where $k$ is even, is both planar and Hamiltonian and its treewidth is $k=\sqrt{n}$.

Since every polygon $P$ satisfies $tw(G(P))\le 3$, we can use a known linear-time algorithm~\cite{bodlaender1996linear} to find a tree decomposition of $G(P)$ of width at most $3$. On one hand, the asymptotic running time of this algorithms encloses a big hidden constant. On the other hand, this algorithm uses explicitly the edges of the graph $G(P)$ (there are at most $3n-6$ edges~\cite{huemer2016}), which need to be obtained from the pairwise intersections between the side disks of $P$ (e.g.\ in $O(n\log n)$ time with a plane sweep~\cite{swart1983}). In our result, a tree decomposition $T$ of $G(P)$ of small width is found directly from $P$ in $O(n)$ time, using the medial axis of $P$ as {\em guide}. Note that $T$ has $O(n)$ nodes, each of at most 4 vertices. The edges of $G(P)$ can be obtained by for each node of $T$ querying for the adjaceny of each pair of vertices of the node (which are side disks). Since $T$ has $O(n)$ nodes, and we perform at most $\binom{4}{2}=6$ adjaceny queries in each of them, the edges of $G(P)$ can be obtained from $P$ in $O(n)$ time. 

The fact that $tw(G(P))\le 3$ (i.e.\ the treewidth is bounded) implies that a large class of NP-hard graph problems (such as maximum independent set, minimum dominating set, hamiltonian cycle, chromatic number, partition into triangles, etc.) can be solved in polynomial time in the graph $G(P)$, using dynamic programming~\cite{bodlaender1988dynamic,bodlaender1994tourist}. For example, if $w\le 3$ is the width of the tree decomposition given by our algorithm, finding a maximum independent set (MIS) of $G(P)$ can be done in $O(2^w\cdot n)=O(n)$ time~\cite{bodlaender1994tourist}. This known algorithm for finding a MIS, and similar algorithms for other NP-hard graph problems, are belived to be time optimal in graphs of bounded treewidth~\cite{Lokshtanov2011}. In general, finding a MIS in a planar Hamiltonian graph is NP-hard~\cite{fleischner2010}, as well as finding a MIS in an intersection graph of disks, even if the disk representation is given as input~\cite{clark1990,wang1988}. 

{\bf Further notation}: Given three different points $p$, $q$, and $r$ in the plane, let $pq$ be the line segment with endpoints $p$ and $q$, $\Delta pqr$ be the triangle with vertex set $\{p,q,r\}$, and $\angle pqr$ be the angle not bigger than $\pi$ with vertex $q$ and sides through $p$ and $r$, respectively. Given a segment $s$, let $|s|$ be the length of $s$, $\ell(s)$ the line that contains $s$, and $D_s$ the disk with diameter $|s|$ and center the midpoint of $s$. Given a pair $D,D'$ of disks, we say that $D$ and $D'$ are {\em intersecting} if $D\cap D'\neq \emptyset$, and that $D,D'$ has a {\em proper} intersection if the interiors of $D$ and $D'$ have a non-empty intersection. We say that a set of disks is {\em proper} if every pair of intersecting disks in the set has a proper intersection.

\section{Bound of treewidth}\label{sec:treewidth}

In this section, we present a linear-time algorithm that constructs from the polygon $P$ a tree decomposition of $G(P)$ of width at most 3. Our algorithm uses (as a preprocessing) the medial axis of $P$. The {\em medial axis} of a simple polygon of $n$ sides is the locus of the points of the polygon that have more than one closest point in the boundary~\cite{preparata1977}, and can be computed in $O(n)$ time~\cite{chin1999finding}. If the polygon is convex, the medial axis is a tree made of line segments, each contained in the bisector of two sides (see Figure~\ref{fig:medial-axis}). 

Given a convex polygon, a {\em maximal} disk is a disk contained in the polygon and tangent to (at least) three sides. Important properties of the medial axis in a convex polygon are as follows: 
\begin{itemize}
    \item Every maximal disk is centered at a vertex of the medial axis.
    
    \item If a maximal disk is tangent to three different sides $a,b,c$ and 
    its center is denoted by $u$, then the center $u'$ of another maximal disk
    tangent to two sides in $\{a,b,c\}$ and another side $d\notin \{a,b,c\}$,
    is such that $u=u'$ or the segment $uu'$ is an edge of the medial axis. 
    
    \item The medial axis can be computed in $O(n)$ time so that each vertex $u$
    is associated with the sides of the polygon (there are at least three) tangent
    to the maximal disk centered at $u$.
\end{itemize}

\begin{figure}[t]
	\centering
	\includegraphics[scale=0.75,page=2]{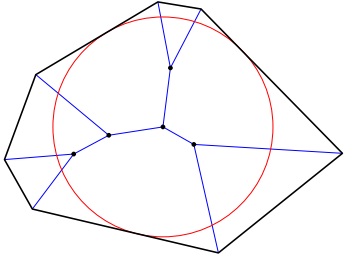}
	\caption{\small{
		Medial axis of a convex polygon and the maximal disk of maximum radius.
	}}	
	\label{fig:medial-axis}
\end{figure}

Let $D_n$ denote $D_0$. Given $u,v\in[0\ldotp\ldotp n]$ with $u\le v+1$, let $D[u,v]$ denote the set $\{D_u,D_{u+1},\ldots,D_{v}\}$ of consecutive side disks. Note that if $u=v+1$ then $D[u,v]$ is the empty set. Two sets $\mathcal{D}_1,\mathcal{D}_2\subset D[0,n-1]=\mathcal{D}$ are called {\em independent} if every disk in $\mathcal{D}_1$ is disjoint from every disk in $\mathcal{D}_2$. 

\begin{lemma}\label{lem:1-chord}
Let $s_i$ and $s_j$, $i<j-1$, be two sides of $P$ such that: there exists a maximal disk $Q\subset P$ tangent to them, and either the lines $\ell(s_i)$ and $\ell(s_j)$ are parallel or for every $k\in[i+1\ldotp\ldotp j-1]$ the line $\ell(s_k)$ separates the intersection point of $\ell(s_i)$ and $\ell(s_j)$ from the interior of $P$. Then, the only disk in the set $\mathcal{D}\setminus D[i,j]$ that can intersect a disk in $D[i+1,j-1]$ is the disk $D_t$, $t\in[0\ldotp\ldotp n]\setminus [i\ldotp\ldotp j]$, such that the interior of $s_t$ is intersected by the 
bisector of $\ell(s_i)$ and $\ell(s_j)$ through $P$.
%
%
%
%
%
%
%
\end{lemma}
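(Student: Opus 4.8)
The plan is to reduce the statement to producing, for each relevant pair of disks, a line that separates them, and to use the maximal disk $Q$ as the barrier from which these separators are built. Recall that two disks are disjoint exactly when the distance between their centers exceeds the sum of their radii, equivalently when some line has one disk strictly on each side. I first fix the geometry around $Q$. In the non-parallel case the lines $\ell(s_i)$ and $\ell(s_j)$ meet at a point $A$ (the \emph{apex}) lying outside $P$, and $Q$, being tangent to both, is inscribed in the wedge at $A$ with center on the bisector $B$ of the angle at $A$ that contains $P$; in the parallel case $A$ recedes to infinity, $B$ is the midline of the slab bounded by $\ell(s_i)$ and $\ell(s_j)$, and $Q$ is a largest disk inscribed in that slab. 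In both cases $Q$ is symmetric about $B$, and $B$, meeting the interior of $P$, crosses $\partial P$ in exactly two points.

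Next I would use the hypothesis to locate $D_t$ and to confine the inner chain. Since $B$ passes through the interior of $P$, it leaves $P$ through one side of the chain $s_{i+1},\dots,s_{j-1}$ and through one side of the complementary chain; the latter side is $s_t$, which both defines $t$ and shows that $B$ meets the interior of $s_t$. The assumption that each inner line $\ell(s_k)$ separates $A$ from the interior of $P$ (vacuous, and replaced by the slab geometry, in the parallel case) is exactly what forces the whole chain $s_{i+1},\dots,s_{j-1}$ to lie in the region bounded by the two tangent segments from $A$ to $Q$ and the arc of $Q$ facing $A$. Consequently, writing $m_k$ for the center of $D_k$, all of $m_{i+1},\dots,m_{j-1}$ lie on the apex side of $Q$, whereas the centers of the disks in $\mathcal{D}\setminus D[i,j]$ lie on the far side of $Q$; thus $Q$ sits between the inner disks and the outer disks.

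With this picture, I would split the outer disks other than $D_t$ into the two flanks $D[j+1,t-1]$ and $D[t+1,i-1]$, treat one flank, and obtain the other by the symmetric argument exchanging the roles of $i$ and $j$ across $B$. For a fixed inner disk $D_k$ and a fixed outer disk $D_l$ in one flank, I would exhibit a separating line passing through the waist of $P$ near the chord of $Q$ joining its two tangency points: the confinement of $s_k$ bounds how far $D_k$ can reach across $Q$, while $D_l$, lying on the far cap and turned away from the waist once $s_l\neq s_t$, cannot reach back; the single side whose disk the crossing of $B$ allows to reach across is precisely $s_t$. Equivalently, one shows $|m_k m_l| > (|s_k|+|s_l|)/2$ for every such pair.

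The main obstacle is this last metric step, and in particular the pairs near the shoulders — the endpoint of $s_i$ shared with $s_{i-1}$ and the endpoint of $s_j$ shared with $s_{j+1}$ — where an inner disk such as $D_{i+1}$ and an outer disk such as $D_{i-1}$ sit at opposite ends of the long side $s_i$. There the separating line must be built from $\ell(s_i)$ together with the fact that $Q$ is tangent to $s_i$ at an interior point of $s_i$, and the apex-separation hypothesis is exactly what guarantees that the inner chain does not bulge past this tangency toward the far cap. I expect the cleanest write-up to verify the center-distance inequality directly in wedge coordinates, handling the generic far-cap pairs and the shoulder pairs separately, and to recover the parallel case as the limiting slab computation.
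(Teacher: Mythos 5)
Your opening moves are sound, but the proposal stops short of a proof at exactly the point where the content of the lemma lies. The confinement step is fine: since each line $\ell(s_k)$, $i<k<j$, separates the apex $A$ from the interior of $P$, the chain $s_{i+1},\ldots,s_{j-1}$ lies in the curvilinear triangle bounded by the tangent segments $AT_i$, $AT_j$ and the near arc of $Q$, and so the inner centers lie on the apex side of $Q$ while the outer centers lie on the far side. But your next claim, that ``$Q$ sits between the inner disks and the outer disks,'' does not follow and is false for the disks themselves: a disk with diameter a side bulges out on both sides of that side, so an inner disk can overlap $Q$ (e.g.\ when $s_k$ is tangent to $Q$, which happens since a maximal disk touches at least three sides), and so can an outer disk. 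A decisive sanity check: $D_t$ also has its center on the far side of $Q$, so any argument that separates disks merely because their centers lie on opposite sides of $Q$ would apply verbatim to $D_t$ and prove the lemma \emph{without} the exception --- which is false, since $D_t$ genuinely can meet inner disks (that is precisely why it is excluded). What is actually needed --- and what the arguments the paper invokes (the 1-Chord Lemma of Huemer and P\'erez-Lantero, whose engine is the containment result cited in this paper's Lemma~\ref{lem:indep} as~\cite[Lemma~7]{huemer2016}) supply --- is a containment of the inner \emph{disks} $D_k$, not merely of their diameters $s_k$, inside the apex disk $Q_A$ of radius $|AT_i|$ centered at $A$, followed by an analysis of which outer disks can reach that region.

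The second, larger gap is that the pairwise disjointness itself --- for every $k$ with $i<k<j$ and every $D_l\in\mathcal{D}\setminus D[i,j]$ with $l\neq t$, a separating line or the inequality $|m_k m_l|>(|s_k|+|s_l|)/2$ --- is never established. You describe where a separating line ``would'' pass, and you yourself flag the shoulder pairs (such as $D_{i+1}$ versus $D_{i-1}$, sitting at opposite ends of $s_i$) as the main obstacle, but no inequality is derived even in the generic case; likewise the uniqueness of the exception $D_t$ is asserted (``the single side whose disk the crossing of $B$ allows to reach across is precisely $s_t$'') rather than proved, since nothing in your text explains why the bisector crossing $s_l$ is \emph{necessary} for $D_l$ to reach an inner disk. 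Since these two items are the entire statement, the proposal is a plan rather than a proof. Note that the paper itself does not spell out an argument either --- it defers wholesale to the proof of the 1-Chord Lemma in~\cite{huemer2016} --- but a blind proof attempt must carry out that metric work, and yours explicitly leaves it open.
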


Lemma~\ref{lem:1-chord} can be proved with the arguments used by Huemer and P\'erez-Lantero~\cite{huemer2016} to prove the 1-Chord Lemma (see~\cite{huemer2016}).

\begin{lemma}\label{lem:indep}
Let $s_i$, $s_j$, and $s_k$, $i<j<k$, be three sides of $P$ such that: there exists a maximal disk $Q\subset P$ tangent to them, $\ell(s_i)$ and $\ell(s_j)$ intersect at point $A$, $\ell(s_j)$ and $\ell(s_k)$ intersect at point $B$, and $s_j\subset AB$. Then, the sets $D[i+1,j-1]$ and $D[j+1,k-1]$ are independent. Furthermore, the lemma still holds if $\ell(s_i)$ and $\ell(s_j)$ are parallel, or $\ell(s_j)$ and $\ell(s_k)$ are parallel.
\end{lemma}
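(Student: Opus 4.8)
The plan is to localize any possible intersection between the two sets to a single candidate pair of disks by applying Lemma~\ref{lem:1-chord} twice, and then to rule out that last pair using the inscribed disk $Q$ together with the betweenness hypothesis $s_j\subseteq AB$.

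First I would apply Lemma~\ref{lem:1-chord} to the pair $(s_i,s_j)$: the maximal disk $Q$ is tangent to both, so, once the parallel/separation hypothesis is checked (see below), the only disk outside $D[i,j]$ that can meet a disk of $D[i+1,j-1]$ is the disk $D_{t_1}$ whose side is crossed by the bisector $b_{ij}$ of $\ell(s_i),\ell(s_j)$ through $P$. Since $D[j+1,k-1]\subseteq \mathcal{D}\setminus D[i,j]$, this means a disk $D_b$ with $b\in[j+1,k-1]$ can meet a disk of $D[i+1,j-1]$ only if $b=t_1$. Symmetrically, applying Lemma~\ref{lem:1-chord} to $(s_j,s_k)$ shows that a disk $D_a$ with $a\in[i+1,j-1]$ can meet a disk of $D[j+1,k-1]$ only if $a=t_2$, where $s_{t_2}$ is the side crossed by the bisector $b_{jk}$ of $\ell(s_j),\ell(s_k)$. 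Consequently, if the two sets were not independent, the only offending pair would be $(D_{t_2},D_{t_1})$, and moreover $t_2\in[i+1,j-1]$ and $t_1\in[j+1,k-1]$.

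Both bisectors pass through the center $O$ of $Q$ (the point equidistant from the three tangent lines), and the betweenness $s_j\subseteq AB$ places $A$, the two endpoints of $s_j$, and $B$ in this order on $\ell(s_j)$, with the tangency point of $Q$ on $s_j$ strictly between them. Setting $X_1=b_{ij}\cap s_{t_1}$ and $X_2=b_{jk}\cap s_{t_2}$, I would use this ordering to argue that the chords $OX_1$ and $OX_2$ leave $Q$ on opposite sides of the radius of $Q$ orthogonal to $s_j$, so that $D_{t_1}$ and $D_{t_2}$ lie on opposite sides of a line through $Q$ and cannot share a point; the role of $s_j\subseteq AB$ is precisely to certify that this radius genuinely separates the left chain from the right chain. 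Equivalently, and perhaps more cleanly, I would argue by contradiction: a common point $x\in D_{t_1}\cap D_{t_2}$ would see both $s_{t_1}$ and $s_{t_2}$ under an angle of at least $\pi/2$, and, combined with the facts that $Q$ is inscribed and tangent to $s_j$ with $s_j\subseteq AB$, this would force $x$ into the interior of $Q$, contradicting $Q\subseteq P$ and the tangency of the sides to $Q$.

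The routine part is verifying that the hypotheses of Lemma~\ref{lem:1-chord} transfer to the pairs $(s_i,s_j)$ and $(s_j,s_k)$: convexity and the presence of $Q$ give the separation-of-the-apex condition in the generic case, while the stated parallel alternatives handle the limiting cases $\ell(s_i)\parallel\ell(s_j)$ and $\ell(s_j)\parallel\ell(s_k)$ by degeneration. The main obstacle, and where the geometric content really lies, is the final step: showing that the single surviving candidate pair $(D_{t_2},D_{t_1})$ is disjoint. A naive half-plane separator (for instance the radius of $Q$ perpendicular to $s_j$) does \emph{not} separate all of $D[i+1,j-1]$ from all of $D[j+1,k-1]$, since individual side disks can bulge across it; the point is that after the double reduction only the two bisector-selected disks remain, and for these two the inscribed disk $Q$ together with $s_j\subseteq AB$ provides the required certificate of disjointness.
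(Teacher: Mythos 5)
Your reduction step is sound: applying Lemma~\ref{lem:1-chord} to the pairs $(s_i,s_j)$ and $(s_j,s_k)$ (whose separation hypotheses do hold here, since $s_j\subset AB$ together with convexity forces both intermediate chains to cut the corners at $A$ and at $B$) correctly localizes any possible intersection to the single candidate pair $(D_{t_2},D_{t_1})$. The genuine gap is in the only step that carries the geometric content: you never actually prove that $D_{t_1}\cap D_{t_2}=\emptyset$. Your first sketch asserts that the line through the center of $Q$ perpendicular to $s_j$ separates the two disks, but that is precisely what has to be shown --- as you yourself concede, side disks can bulge across such a line, and the observation that the two bisector chords ``leave $Q$ on opposite sides'' says nothing about whether the two bulges overlap. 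Your second sketch is not a valid argument at all: a common point $x$ of the two disks lying in $\mathrm{int}(Q)$ contradicts nothing, because $Q\subset P$ and interior points of $Q$ are perfectly legitimate points of $P$; no contradiction with $Q\subseteq P$ or with tangency can be derived from $x\in\mathrm{int}(Q)$ alone.

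The paper closes exactly this hole, and in a way that makes the double application of Lemma~\ref{lem:1-chord} unnecessary. Let $T_i,T_j,T_k$ be the tangency points of $Q$ with the three sides, and let $Q_A$, $Q_B$ be the disks centered at $A$ and $B$ with radii $|AT_i|=|AT_j|$ and $|BT_j|=|BT_k|$. Because $T_j\in s_j\subset AB$, these two disks are externally tangent at $T_j$ and hence have disjoint interiors; and by Lemma~7 of~\cite{huemer2016}, every disk of $D[i+1,j-1]$ lies in the interior of $Q_A$ and every disk of $D[j+1,k-1]$ in the interior of $Q_B$, which yields independence of the two whole sets at once (in the parallel cases, a half-plane bounded by $\ell(T_iT_j)$ or $\ell(T_jT_k)$ plays the role of $Q_A$ or $Q_B$). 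Note that this containment lemma is also what your separating-line claim secretly needs: $Q_A$ and $Q_B$ are each tangent at $T_j$ to the line through the center of $Q$ perpendicular to $s_j$, so separation of $D_{t_1}$ from $D_{t_2}$ by that line follows from, and essentially amounts to, the containment statement. Without importing that fact (or reproving it for your surviving pair), your final step does not go through.
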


\begin{proof}
Let $T_i$, $T_j$, and $T_k$ be the points of tangency between $Q$ and $s_i$, $s_j$, and $s_k$, respectively. Let $Q_A$ (resp.\ $Q_B$) denote the disk with center $A$ (resp.\ $B$) and radius $|AT_i|=|AT_j|$ (resp.\ $|BT_j|=|BT_k|$). Note that $Q_A$ and $Q_B$ are tangent and interior disjoint (see Figure~\ref{fig:2circles}). It can be proved that each disk in $D[i+1,j-1]$ is contained in the interior of $Q_A$~\cite[Lemma~7]{huemer2016}. Similary, each disk in $D[j+1,k-1]$ is contained in the interior of $Q_B$. Then, the lemma follows. If $\ell(s_i)$ and $\ell(s_j)$ are parallel, then we can set $Q_A$ as the halfplane bounded by $\ell(T_iT_j)$ that does not contain $s_k$, which will contain every disk in $D[i+1,j-1]$. Similarly, if $\ell(s_j)$ and $\ell(s_k)$ are parallel, we can set $Q_B$ as the halfplane bounded by $\ell(T_jT_k)$ that does not contain $s_i$, which will contain every disk in $D[j+1,k-1]$.\qed
\end{proof}

\begin{figure}[t]
	\centering
	\includegraphics[scale=0.7,page=8]{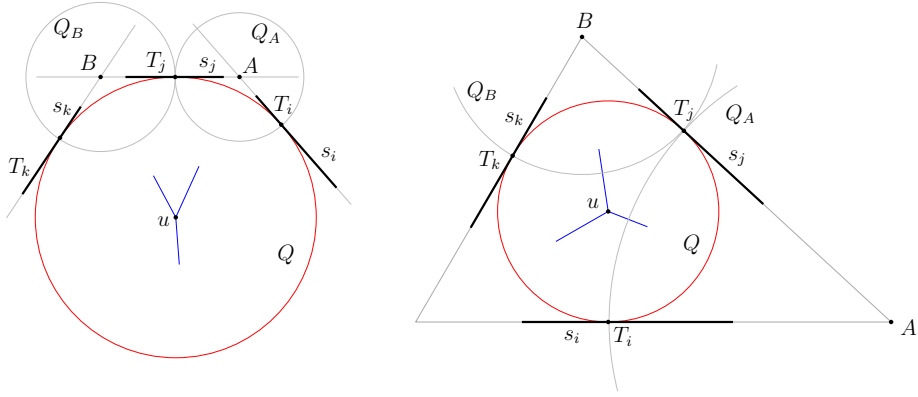}
	\caption{\small{
		Illustration of Lemma~\ref{lem:indep}. In both cases, the center $u$ of $Q$ is a vertex
		of the medial axis of $P$.
	}}	
	\label{fig:2circles}
\end{figure}

\begin{theorem}\label{theo:tree}
For any convex polygon $P$ of $n$ sides, a tree decomposition of $G(P)$ of width at most 3 can be found in $O(n)$ time.
\end{theorem}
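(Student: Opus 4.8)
The plan is to turn the medial axis of $P$ into the tree decomposition $T$ by a top-down recursion, using Lemma~\ref{lem:indep} to split an arc of consecutive disks into two independent pieces and Lemma~\ref{lem:1-chord} to bound how each piece talks to the rest of the polygon. As preprocessing I would compute the medial axis $M$ in $O(n)$ time, with each branch point labeled by the (generically three) sides tangent to its maximal disk; degenerate branch points tangent to $k\ge 4$ sides are replaced by $k-2$ degree-$3$ branch points joined by zero-length edges, which does not affect the asymptotics. I root $M$ at the center of the maximum-radius maximal disk, tangent to sides $s_a,s_b,s_c$. This gives the root bag $\{D_a,D_b,D_c\}$, and applying Lemma~\ref{lem:indep} to the three triples taken in cyclic order partitions the remaining disks into three pairwise-independent arcs $D[a+1,b-1]$, $D[b+1,c-1]$, $D[c+1,a-1]$ (the middle-side separation conditions $s_j\subset AB$ are verified as routine convex geometry). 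Because the arcs are pairwise independent, the only outside disks that can meet a given arc are its two flanking disks.

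The core is a recursion $\textsc{Decompose}(D[p,q],B)$ with the invariant that $B$ is a set of at most $3$ disks outside $D[p,q]$ containing every outside disk that can intersect $D[p,q]$. I follow the medial-axis edge that enters the region of the arc to reach the branch point tangent to the two flanking sides $s_{p-1},s_{q+1}$ and a middle side $s_m$, $p\le m\le q$. By Lemma~\ref{lem:indep} the sub-arcs $D[p,m-1]$ and $D[m+1,q]$ are independent, so no edge crosses between them. By Lemma~\ref{lem:1-chord} applied to each flank/middle pair, the disks strictly inside a sub-arc meet the outside only through the two flanks of that sub-arc and the single disk $D_t$ cut by the corresponding bisector; hence each sub-arc again inherits a boundary of size at most $3$, preserving the invariant. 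I emit the node bag $B\cup\{D_m\}$ (size at most $4$), recurse on the two sub-arcs, and attach their root bags—equal to their boundaries—as children. The base case is a singleton arc $\{D_p\}$, for which I emit $B\cup\{D_p\}$, again of size at most $4$.

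It then remains to verify the three tree-decomposition axioms. Axiom (1) holds because every disk is eventually the splitter $D_m$ or a base-case singleton. For axiom (2), edges between consecutive or co-tangent disks are covered by the branch-point bags, and the remaining edges from a sub-arc to the outside are exactly the edges to its flanks and to $D_t$, all of which sit together with the inside disks in an emitted bag; Lemma~\ref{lem:indep} guarantees that there is no surviving edge between two different sub-arcs to cover. Axiom (3) I would prove by showing that, for a fixed $D_m$, the nodes whose bag contains $D_m$ form a connected subtree of $T$: this mirrors the fact that the branch points of $M$ whose maximal disk is tangent to $s_m$ form a connected subpath, together with the consistent way $D_m$ is threaded as a flank or as the $D_t$-disk of neighboring sub-arcs. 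Since $M$ has $O(n)$ branch points and each is processed in $O(1)$ time, the recursion produces $O(n)$ bags of size at most $4$ in $O(n)$ total time, giving $tw(G(P))\le 3$.

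The step I expect to be the main obstacle is maintaining the size-$\le 3$ boundary invariant in lockstep with axioms (2) and (3). Concretely, one must show that the extra disk $D_t$ supplied by Lemma~\ref{lem:1-chord} is the \emph{same} disk across adjacent recursion levels (so that a child's boundary is consistent with the parent's emitted bag), which is what simultaneously keeps the bags from exceeding size $4$ and keeps the occurrences of each disk connected. Pinning down $D_t$ via the bisector edges of the medial axis, and ruling out a ``stray'' outside neighbor that would inflate some boundary to four disks, is the delicate geometric core; the index bookkeeping for the cyclic wrap-around arc $D[c+1,a-1]$ and the degenerate high-degree branch points are comparatively routine.
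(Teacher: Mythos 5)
Your proposal is correct and follows essentially the same approach as the paper: a recursion guided by the medial axis, rooted at the maximum-radius maximal disk, splitting each arc at the middle side tangent to the next branch point, with Lemma~\ref{lem:indep} giving independence of the two sub-arcs and Lemma~\ref{lem:1-chord} keeping each sub-arc's outside boundary at three disks, for $O(n)$ bags of size at most $4$. The one obstacle you flag---keeping the bisector-determined extra disk consistent with the parent's bag so that axiom~(3) and the size bound hold simultaneously---is resolved in the paper by a simple rule in each recursive call: if the bisector of the two tangent sides crosses the side $s_z$ of the current extra disk, keep $D_z$ as the extra disk, and otherwise substitute the opposite flanking disk, which is safe because the inherited invariant then guarantees that no stray outside disk can meet the sub-arc at all.
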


\begin{proof}
Let $f$ be the following generic function, that will be used later in the construction of a tree decomposition of $G(P)$: The input consists of three integers $u,v,z\in[0\ldotp\ldotp n]$ and a disk $Q\subset P$, such that: $u<v$, $z\notin[u\ldotp\ldotp v]$, $Q$ is tangent to both $s_u$ and $s_v$, no disk in $\mathcal{D}\setminus(D[u,v]\cup\{D_z\})$ intersects a disk in $D[u+1,v-1]$, and $\ell(s_u)$ and $\ell(s_v)$ are parallel, or for every $k\in [u+1\ldotp\ldotp v-1]$ the line $\ell(s_k)$ separates the interior of $P$ from the intersection point of $\ell(s_u)$ and $\ell(s_v)$. The output $f(u,v,z,Q)$ is a rooted tree decomposition of $G(D[u,v]\cup\{D_z\})$ of width at most 3, whose root node contains $\{D_u,D_v,D_z\}$.

We compute $f(u,v,z,Q)$ as follows: If $u=v-1$, then we return the leaf node $\{D_u,D_v,D_z\}$. Otherwise, if $u<v-1$, then we find an index $t\in[u+1\ldotp\ldotp v-1]$ such that there exists a maximal disk $Q'\subset P$ tangent to $s_u$, $s_t$, and $s_v$. Note that the centers of $Q$ and $Q'$ are the endpoints of an edge of the medial axis of $P$, or coincide. We can then compute both $t$ and $Q'$ in $O(1)$ time from the medial axis. After that we create the root node $r'=\{D_u,D_v,D_z,D_t\}$ of $f(u,v,z,Q)$, which has at most two children, constructed recursively as follows:
\begin{enumerate}
    \item If $u<t-1$, then we set the following child: 
    If the bisector of $\ell(s_u)$ and $\ell(s_t)$ through $P$ intersects $s_z$, 
    then we set the child $f(u,t,z,Q')$. Otherwise, we set the child $f(u,t,v,Q')$.

    \item If $t<v-1$, then we set the following child: 
    If the bisector of $\ell(s_t)$ and $\ell(s_v)$ through $P$ intersects $s_z$, 
    then we set the child $f(t,v,z,Q')$. Otherwise, we set the child $f(t,v,u,Q')$.
\end{enumerate}
Observe that $D[u+1,t-1]$ and $D[t+1,v-1]$ are independent (Lemma~\ref{lem:indep}), which implies that the children of $r'$ are not in conflict with property~(2) of tree decompositions. Furthermore, Lemma~\ref{lem:1-chord} ensures that in each call to $f$ the way we choose the parameters, mainly the third one, is correct.

\begin{figure}[t]
	\centering
	\includegraphics[scale=0.8,page=9]{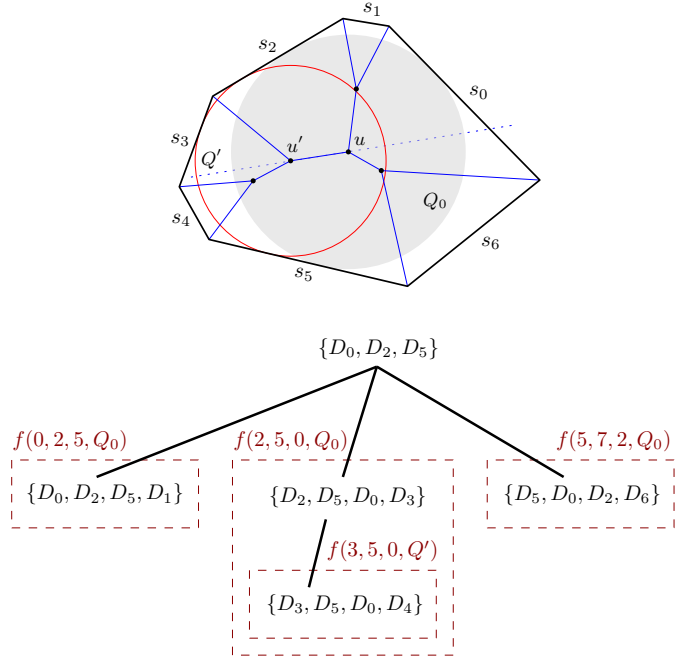}
	\caption{\small{
		Up: A convex 7-gon $P$ of sides $s_0,s_1,\ldots,s_6$. The disks $Q_0$ and $Q'$ are maximal disks
		centered at the vertices $u$ and $u'$ of the medial axis of $P$, respectively. The dotted
		line is the bisector of $\ell(s_2)$ and $\ell(s_5)$, used to call $f(3,5,0,Q')$ from $f(2,5,0,Q_0)$.
		Down: A tree decomposition of $P$ of width 3. Each node different from the root is associated with the
		respective call to the function $f$.
	}}	
	\label{fig:tree-example}
\end{figure}

With $f$ described, we proceed to explain how to construct a rooted tree decomposition $T$ of $G(P)$. Using the medial axis of $P$, we first find a maximal disk $Q_0$ of maximum radius. Note that $Q_0$ can be found in $O(n)$ time. Assume w.l.o.g.\ that $Q_0$ is tangent to the sides $s_0$, $s_i$, and $s_j$, respectively. Further assume $1\le i<j\le n-1$. The root of $T$ is the node $r=\{D_0,D_i,D_j\}$, and we set to $r$ at most 3 children, constructed using $f$ as follows:
\begin{enumerate}
    \item If $i>1$, then we set $f(0,i,j,Q_0)$ as a child of $r$.
    \item If $i<j-1$, then we set $f(i,j,0,Q_0)$ as a child of $r$.
    \item If $j<n-1$, then we set $f(j,n,i,Q_0)$ as a child of $r$.
\end{enumerate}
Refer to Figure~\ref{fig:tree-example} for an example. Since $D[1,i-1]$, $D[i+1,j-1]$, and $D[j+1,n-1]$ are pairwise independent (Lemma~\ref{lem:indep}), this selection of children for $r$ is not in conflict with property~(2) of tree decompositions. Furthermore, Lemma~\ref{lem:1-chord} ensures that in each call to $f$ the parameters, mainly the third one, satisfy the desired properties. The treewidth of $T$ is trivially at most 3, and using the above arguments, it can be proved formally by induction that $T$ is in fact a tree decomposition of $G(P)$. Observe that the running time can be charged to the number of nodes of $T$. Except of the root node of $T$, every other node contains a different side disk of $\mathcal{D}\setminus\{D_0,D_i,D_j\}$. Then, $T$ has $n-2$ nodes and the running time is $O(n)$.\qed
\end{proof}

\section{Bound of the size of a MIS}\label{sec:size-MIS}

In this section, we show that for every $n\ge 3$, there exist convex polygons $P$ of $n$ sides such that $\alpha(G(P))=\lceil n/4\rceil$. The idea is that in those polygons $P$ the vertices of $G(P)$ can be partitioned into $\lceil n/4\rceil$ subsets, such that the induced subgraph of each subset is isomorphic to $K_4$, except possibly one of them that is isomorphic to some $K_r$ with $r\in\{1,2,3\}$. Then, any independent set of $G(P)$ cannot contain two vertices of the same subset; hence it must contain at most $\lceil n/4\rceil$ vertices.

\begin{lemma}\label{lem:a}
Let $P$ be a convex polygon of $n\ge 3$ sides such that $P$ has an acute interior angle and $\mathcal{D}(P)$ is proper. For every $m\in\{1,2,3,4\}$, we can construct from $P$ a convex polygon $Q$ of $n+m$ sides, where the side disks of $Q$ can be partitioned into two sets $\mathcal{D}_0$ and $\mathcal{D}_m$ such that $G(\mathcal{D}_0)\simeq G(P)$ and $G(\mathcal{D}_m)\simeq K_m$. If $m=4$, $Q$ has also an acute interior angle and $\mathcal{D}(Q)$ is proper. 
\end{lemma}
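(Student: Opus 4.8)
The plan is to build $Q$ by editing $P$ only inside a small neighborhood of an acute vertex $w$, and to create the clique $K_m$ there by exploiting the single geometric fact that makes acute angles special: at a vertex of interior angle $\theta<\pi/2$ the two incident sides come arbitrarily close to each other near $w$. Concretely, if $s$ and $s'$ are the sides meeting at $w$, then at distance $x$ from $w$ the two sides are separated by a distance proportional to $x$, with a factor bounded away from the ``flat'' case because $\theta$ is acute. Hence a short segment taken on $s$ and a short segment taken on $s'$, both sufficiently close to $w$, have midpoints at distance $O(x)$ while their radii are of the same order; choosing the segment lengths slightly larger than the local separation forces the two disks to meet, and to meet \emph{properly} (strict inequality between the center distance and the sum of radii). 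Together with the standing fact that the disks of two adjacent sides always intersect properly (their angle is $<\pi$, so $D_s\cap D_{s'}$ has nonempty interior), this gives me enough intersecting pairs to assemble a clique: for $m=4$ I would put two short consecutive new sides on $s$ near $w$ and two on $s'$ near $w$; the two along each side are adjacent and hence intersect, while each ``$s$-segment'' is close enough to each ``$s'$-segment'' that the remaining four cross pairs intersect as well, yielding all $\binom{4}{2}$ edges, i.e.\ $K_4$. For $m\in\{1,2,3\}$ I would use fewer segments (one segment, a pair of adjacent segments, or a pair plus one segment on the opposite side), which are strictly easier specializations.

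For the preservation $G(\mathcal{D}_0)\simeq G(P)$, the key is that I only shorten (and slightly bend) $s$ and $s'$ within a \emph{shrinking} neighborhood of $w$, leaving the other $n-2$ sides untouched. As this neighborhood shrinks, the modified disks converge to $D_s$ and $D_{s'}$. Because $\mathcal{D}(P)$ is proper, every intersection \emph{and} every non-intersection among the original disks is an open condition (a strict inequality), so there is a neighborhood small enough that all of these relations persist; the $n-2$ untouched disks keep their relations trivially. Moreover the new disks are tiny and confined near $w$, so Lemma~\ref{lem:1-chord} and Lemma~\ref{lem:indep} bound their influence to the tip and guarantee they create no long-range edges that could interfere with the isomorphism. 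For the case $m=4$ I would additionally keep the apex of $Q$ sharp, so that $Q$ still has an acute interior angle, and choose all perturbations generically so that every intersecting pair (old, new, and cross) meets with strict inequality; this makes $\mathcal{D}(Q)$ proper and re-establishes both hypotheses, enabling the lemma to be iterated.

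The main obstacle is the case $m=4$, and it is twofold. First, four new sides forming $K_4$ cannot be four \emph{consecutive} sides of $Q$: a convex chain of four consecutive sides has pairwise intersecting disks only if it turns by more than $\pi$ (the two ``two-apart'' pairs impose opposite requirements on the short middle edges unless the turn is nearly $\pi$), and an acute corner can contribute a boundary turn of only $\pi-\theta<\pi$. This is precisely why the four clique sides must be split two-on-each incident side, separated along the boundary, and it is the source of the delicate bookkeeping: I must realize this split while keeping $Q$ convex, keeping the total number of sides exactly $n+4$, and keeping the acute apex intact between the two groups. Second, I must verify quantitatively that all six pairs of the $K_4$ intersect \emph{properly}, which amounts to balancing the chosen segment length against the local wall separation $\asymp x$ (itself governed by $\theta$ and the distance $x$ to $w$) so that every cross pair is a strict intersection while every new bend remains a convex (left) turn and no original relation is disturbed. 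The smaller values $m\in\{1,2,3\}$ then follow as routine restrictions of this construction, and the bound $m\le 4$ is consistent with $G(Q)$ being planar (hence $K_5$-free).
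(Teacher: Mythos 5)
Your treatment of $\mathcal{D}_0$ (properness makes every intersection and non-intersection an open condition, so editing a small neighborhood of the acute vertex preserves $G(\mathcal{D}_0)\simeq G(P)$) matches the paper, and your $m\in\{1,2,3\}$ cases are fine. The proposal breaks down at $m=4$, which is the case the lemma exists for (it is the one iterated in the theorem), and the failure is twofold.

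First, the claim you use to rule out putting the four new sides consecutively is false. Four consecutive sides of a convex polygon \emph{can} have pairwise intersecting side disks while turning by much less than $\pi$, and this is exactly the paper's construction: it does \emph{not} keep the apex. It cuts the corner $B$ off, replacing it by the chain $A',R,S,T,C'$, where $B'$ lies on the bisector near $B$ (so $\Delta A'B'C'$ is isosceles with acute apex angle $\beta$), $T$ lies in the interior of $C'B'$ so close to $B'$ that the circle $\partial D_{C'T}$ crosses the opposite leg $A'B'$ twice, $R$ is such a crossing point, and $S$ is a point of that circle near $R$. Acuteness enters precisely here: with legs of unit length and $|B'T|=t$, the center of $D_{C'T}$ is at distance $\frac{1+t}{2}\sin\beta$ from $\ell(A'B')$ while its radius is $\frac{1-t}{2}$, so a valid $t>0$ exists if and only if $\sin\beta<1$. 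Since $R,S\in\partial D_{TC'}$, the pairs $\{D_{TC'},D_{A'R}\}$ (they share $R$) and $\{D_{TC'},D_{RS}\}$ (indeed $D_{RS}\subset D_{TC'}$) come for free by Thales, and $D_{A'R}\cap D_{ST}$ is proper for $S$ close enough to $R$; all six pairs intersect although the chain turns only by $\pi-\beta\in(\pi/2,\pi)$. So what the acute angle buys is not a turn exceeding $\pi$ but this Thales-circle configuration.

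Second, the configuration you propose instead --- keeping the apex $w$ as a vertex with two new sides along each incident side --- provably cannot yield $K_4$, for any choice of lengths. Put $w$ at the origin and the legs along unit vectors $u,v$ with $u\cdot v=\cos\theta=c$, and let the new sides be $e_1$ from $0$ to $x_1u$, $e_2$ from $x_1u$ to $x_2u$, $e_1'$ from $0$ to $y_1v$, $e_2'$ from $y_1v$ to $y_2v$. Comparing center distance with sum of radii, $D_{e_1}\cap D_{e_2'}\neq\emptyset$ is equivalent to
\begin{equation*}
2y_1y_2\;\le\;x_1\bigl[(1+c)\,y_2-(1-c)\,y_1\bigr],
\end{equation*}
which forces $y_1\le\tfrac{1+c}{2}x_1$; symmetrically $D_{e_2}\cap D_{e_1'}\neq\emptyset$ forces $x_1\le\tfrac{1+c}{2}y_1$. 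Multiplying gives $(1+c)^2\ge 4$, i.e.\ $\cos\theta\ge 1$, impossible for every $\theta\in(0,\pi)$. So the two cross pairs that skip the apex can never both intersect: the ``balancing of segment lengths'' you defer to is not delicate but contradictory, and this holds no matter how sharp the apex is. Nor does the ``slight bending'' escape this: since $w$ and the untouched portions of the sides are kept, convexity forces each inserted vertex to lie weakly on the outer side of the side it subdivides, so bending moves the disks of each group away from the other leg (and changes radii only at second order), worsening exactly the pairs that already fail. This obstruction is why the paper removes the apex altogether; as written, your proof of the $m=4$ case, and hence of the lemma, does not go through.
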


\begin{proof}
Let $\mathcal{V}$ be the vertex set of $P$. Let $A,B,C$ be three consecutive vertices of $P$ in counter-clockwise order along the boundary of $P$, such that the interior angle at $B$ is acute. Since $\mathcal{D}(P)$ is proper, we can select points $A'$ and $C'$ in the interiors of $AB$ and $BC$, respectively, such that: $|A'B|=|BC'|$, and $A'$ and $C'$ are close enough to $B$ so that $G(P)=G(\mathcal{D})\simeq G(\mathcal{D}_0)$, where $\mathcal{D}_0=(\mathcal{D}(P)\setminus\{D_{AB},D_{BC}\})\cup\{D_{AA'},D_{C'C}\}$, and $\mathcal{D}_0$ is also proper. 

Let $Q_1$ be the convex polygon with vertex set $(\mathcal{V}\setminus\{B\})\cup\{A',C'\}$.
If $m=1$, then the polygon $Q=Q_1$ satisfies the desired properties, where $\mathcal{D}_m=\{D_{A'C'}\}$. 

Let $B'\in P$ be a point in the bisector of the angle $\angle ABC$, $B'\neq B$, that is close enough to $B$ so that $\Delta A'B'C'\subset \Delta A'BC'$ and the angle $\angle A'B'C'$ is acute. Let $Q_2$ be the convex polygon with vertex set $(\mathcal{V}\setminus\{B\})\cup\{A',B',C'\}$. If $m=2$, then the polygon $Q=Q_2$ satisfies the desired properties, where $\mathcal{D}_m=\{D_{A'B'},D_{B'C'}\}$.

Let $T$ be a point in the interior of the segment $C'B'$ such that the boundary of $D_{C'T}$ intersects twice the segment $A'B'$ (see Figure~\ref{fig:lem-a}). This latter condition can be ensured because angle $\angle A'B'C'$ is acute and $\Delta A'B'C'$ is isosceles with base $A'C'$. Let $R$ be the intesection point between the boundary of $D_{C'T}$ and $A'B'$ that is closer to $B'$. Let $Q_3$ be the convex polygon with vertex set $(\mathcal{V}\setminus\{B\})\cup\{A',R,T,C'\}$. If $m=3$, then the polygon $Q=Q_3$ satisfies the desired properties, where $\mathcal{D}_m=\{D_{A'R},D_{RT},D_{TC'}\}$.

Let $S$ be a point in the interior of the shortest arc of the boundary of $D_{C'T}$ that connects $R$ and $T$, such that the pair $D_{ST},D_{A'R}$ has a proper intersection (see Figure~\ref{fig:lem-a}). Let $Q_4$ be the convex polygon with vertex set $(\mathcal{V}\setminus\{B\})\cup\{A',R,S,T,C'\}$, and assume $m=4$. Note that the interior angle $\angle STC'$ of $Q_4$ is acute because $\angle C'ST$ equals $\pi/2$ by Thales' theorem. Furthermore, $Q_4$ is proper. Hence, for $m=4$ the polygon $Q=Q_4$ satisfies the desired properties, where $\mathcal{D}_m=\{D_{A'R},D_{RS},D_{ST},D_{TC'}\}$.\qed
\end{proof}

\begin{figure}[t]
	\centering
	\includegraphics[scale=0.7,page=4]{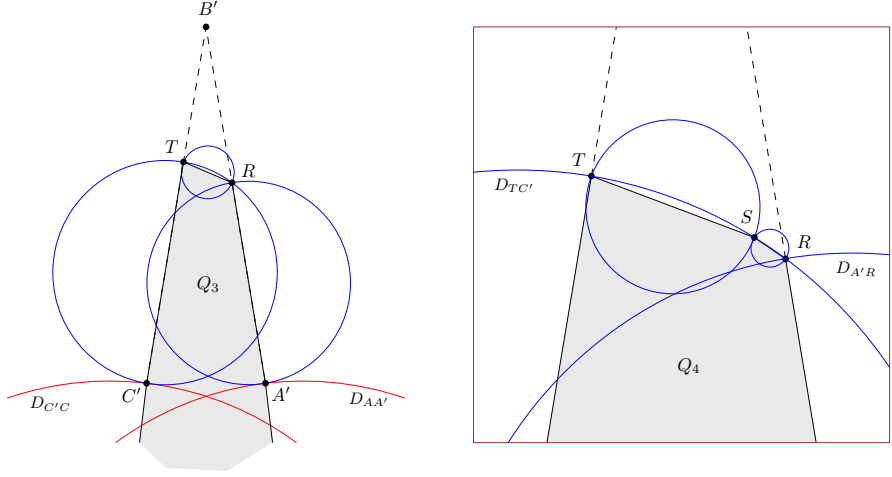}
	\caption{\small{Illustration of Lemma~\ref{lem:a}}.}
	\label{fig:lem-a}
\end{figure}

\begin{theorem}
For every $n\ge 3$, there exists a polygon $P$ of $n$ sides such that $\alpha(G(P))=\lceil n/4\rceil$.
\end{theorem}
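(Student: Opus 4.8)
My plan is to establish the two matching inequalities $\alpha(G(P))\ge \lceil n/4\rceil$ and $\alpha(G(P))\le \lceil n/4\rceil$. The first holds for \emph{every} convex $n$-gon: since $G(P)$ is planar, the four-color theorem gives it a proper $4$-coloring, and the largest of the four color classes is an independent set of size at least $\lceil n/4\rceil$. So the task reduces to constructing, for each $n\ge 3$, a convex $n$-gon whose intersection graph satisfies $\alpha(G(P))\le \lceil n/4\rceil$. I would obtain this from a \emph{clique cover}: if the vertices of $G(P)$ partition into $\lceil n/4\rceil$ cliques, then any independent set meets each clique in at most one vertex, forcing $\alpha(G(P))\le \lceil n/4\rceil$.

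To produce polygons with such a clique cover I would iterate Lemma~\ref{lem:a}. Starting from a seed polygon whose side disks form a single clique, each application with $m=4$ attaches a new $K_4$ block while keeping the old intersection graph as the induced subgraph $G(\mathcal{D}_0)$, and, crucially, preserves the hypotheses of the lemma (an acute interior angle and a proper disk set). Since ``the vertex set admits a partition into cliques'' is invariant under graph isomorphism and passes to induced subgraphs, and $G(\mathcal{D}_0)$ is precisely the subgraph of $G(Q)$ induced on $\mathcal{D}_0$, a clique partition of $G(P)$ into $c$ parts yields one of $G(Q)$ into $c+1$ parts (the transported parts together with $\mathcal{D}_m$); this is unaffected by any edges between $\mathcal{D}_0$ and $\mathcal{D}_m$, which can only decrease $\alpha$. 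I need two seeds, both acute and proper: a triangle, whose three side disks form $K_3$, and an isosceles trapezoid with vertices $(-a,0),(a,0),(b,h),(-b,h)$ chosen with $0<b<a$ and $2\sqrt{ab}<h<a+b$, for which a short computation shows that all six pairs of side disks intersect properly (so $G\simeq K_4$) and the two base angles are acute.

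It then remains to assemble the target $n$ from these blocks, making sure the count of cliques always equals $\lceil n/4\rceil$ and that the one possibly-small block is introduced last. For $n\equiv 3\pmod{4}$ I would start from the triangle and apply the $m=4$ step $(n-3)/4$ times; for $n\equiv 0\pmod{4}$ I would start from the trapezoid and apply it $(n-4)/4$ times. For $n\equiv 1$ or $n\equiv 2\pmod{4}$, I would first build a $4\lfloor n/4\rfloor$-gon exactly as in the $n\equiv 0$ case and then perform a single final application of Lemma~\ref{lem:a} with $m=1$ or $m=2$, respectively; this last step need not preserve acuteness or properness because no further extension follows it. In each case the number of cliques is exactly $\lceil n/4\rceil$, and combined with the lower bound this yields $\alpha(G(P))=\lceil n/4\rceil$.

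The genuinely delicate points are not the combinatorial bookkeeping but two geometric ones. First, exhibiting the quadrilateral seed with $G\simeq K_4$ \emph{and} an acute angle \emph{and} proper disks: squares and parallelograms make opposite side disks only tangent, so some asymmetric shape like the trapezoid above is needed, and one must verify that the two ``opposite'' disk pairs overlap with positive area while convexity and an acute angle are retained. Second, one must be disciplined about order, since Lemma~\ref{lem:a} guarantees the acute-and-proper invariant only for $m=4$; hence every block except the final one must be a $K_4$, and a small $K_1$ or $K_2$ block can appear only as the terminal operation. I would also record the elementary fact that adjacent side disks of any convex polygon always meet properly (they share a polygon vertex and overlap in a wedge of positive area whenever the interior angle is less than $\pi$), which is what makes the triangle seed and the adjacency structure of each block behave as claimed.
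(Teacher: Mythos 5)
Your proposal is correct and follows essentially the same route as the paper: the lower bound via planarity and the four-color theorem, and the upper bound by iterating Lemma~\ref{lem:a} to build a partition of the vertices into $\lceil n/4\rceil$ cliques ($K_4$ blocks plus at most one small block). The only differences are cosmetic: you use an isosceles trapezoid as the acute, proper $K_4$ seed where the paper uses the kite $(1,0),(0,2),(-1,0),(0,-1)$, and for $n\equiv 3 \pmod 4$ you seed with a triangle and append $K_4$ blocks, whereas the paper starts from the quadrilateral and attaches the $K_3$ block last via $m=3$; both orderings are valid since a triangle is acute and has proper side disks.
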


\begin{proof}
If $n=3$, then every triangle $P$ satisfies $G(P)\simeq K_3$ and $\alpha(G(P))=1=\lceil 3/4\rceil$. Assume $n\ge 4$, and let $t\ge 1$ and $r\in\{0,1,2,3\}$ be the integer numbers such that $n=4t+r$. Let $P_4$ be a convex quadrilateral with an acute interior angle and such that $\mathcal{D}(P_4)$ is proper and $G(P_4)\simeq K_4$. For example, $P_4$ can be the quadrilateral with vertices at coordinates $(1,0)$, $(0,2)$, $(-1,0)$, and $(0,-1)$, respectively. Consider the sequence $P_4,P_8,P_{12},\ldots,P_{4t}$ of $t$ convex polygons, where for $i=2,\ldots,t$ the polygon $P_{4i}$ has $4i$ sides and is constructed from $P_{4i-4}$ by using Lemma~\ref{lem:a} with $m=4$. By construction, the vertices of $G(P_{4t})$ can be partitioned into $t$ subsets $V_1,V_2,\ldots,V_t$ of size 4, such that the graph induced by each $V_i$ is isomorphic to $K_4$. Since any independent set of $G(P_{4t})$ cannot contain two vertices of the same $V_i$, $\alpha(G(P_{4t}))\le t=\lceil n/4\rceil$.

If $r\in\{1,2,3\}$, let $P_{4t+r}$ be the convex polygon of $n=4t+r$ sides constructed from $P_{4t}$ by using Lemma~\ref{lem:a} with $m=r$. The vertices of $G(P_{4t+r})$ can be partitioned into subsets $W_1,W_2,\ldots,W_t,W_{t+1}$ where the graph induced by each of $W_1,W_2,\ldots,W_t$ is isomorphic to $K_4$, and the graph induced by $W_{t+1}$ is isomorphic to $K_r$. Since any independent set of $G(P_{4t+r})$ cannot contain two vertices of the same $W_i$, we obtain $\alpha(G(P_{4t+r}))\le t+1=\lceil n/4\rceil$. 

For any $r\in\{0,1,2,3\}$, let $P=P_{4t+r}$, which satisfies $\alpha(G(P))\le \lceil n/4\rceil$. Since $G(P)$ is planar, its chromatic number is at most 4. Then, it can be colored with at most 4 colors, and the majority color induces an independent set of size at least $\lceil n/4\rceil$. Hence, $\alpha(G(P))=\lceil n/4\rceil$.\qed
\end{proof}

\section{Classes of planar Hamiltonian graphs}\label{sec:classes}

In this section, we prove that our class of graphs, that is, the side-disk realizable graphs, contains all outerplanar Hamiltonian graphs except the cycle of lengh four; and that it is a proper subclass of the planar Hamiltonian graphs. Let $C_4$ denote the cycle graph of length four.

\begin{lemma}\label{lem:not-C4}
The graph $C_4$ is not side-disk realizable.
\end{lemma}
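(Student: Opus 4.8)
The plan is to show the contrapositive in a strong form: that $C_4$ has only four vertices, so any realizing polygon must be a convex quadrilateral, and that no convex quadrilateral can avoid \emph{both} of the two possible ``diagonal'' edges of its side-disk graph. Since $C_4$ has $4$ vertices, a realizing polygon $P$ must have $n=4$; write its vertices $V_0,V_1,V_2,V_3$ in counter-clockwise order, with sides $s_i=V_iV_{i+1}$ (indices mod $4$) carrying the disks $D_0,D_1,D_2,D_3$. Because consecutive sides share a vertex, the Hamiltonian cycle $(D_0,D_1,D_2,D_3,D_0)$ is always present, so the four edges $\{D_0,D_1\},\{D_1,D_2\},\{D_2,D_3\},\{D_3,D_0\}$ always belong to $G(P)$. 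The only two remaining candidate edges are the ``diagonals'' $\{D_0,D_2\}$ and $\{D_1,D_3\}$, on the two pairs of opposite sides. Hence $G(P)\simeq C_4$ would force $D_0\cap D_2=\emptyset$ \emph{and} $D_1\cap D_3=\emptyset$, and I would derive a contradiction by proving that in every convex quadrilateral at least one pair of opposite side disks intersects.

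The witness I would use is the intersection point $O$ of the two diagonals $V_0V_2$ and $V_1V_3$, which lies in the interior of $P$ by convexity. The key tool is Thales' theorem in the form used elsewhere in the paper: a point $x$ lies in the closed disk $D_s$ with diameter $s$ if and only if $x$ sees $s$ under an angle at least $\pi/2$. Thus $O\in D_0 \iff \angle V_0OV_1\ge \pi/2$, and similarly for the other disks. Since $V_0,O,V_2$ are collinear and $V_1,O,V_3$ are collinear, vertical angles give $\angle V_0OV_1=\angle V_2OV_3$ and $\angle V_1OV_2=\angle V_3OV_0$, while $\angle V_0OV_1+\angle V_1OV_2=\pi$ because $V_0,O,V_2$ are collinear. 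This single supplementary relation is what drives the whole argument.

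I would then split into two cases. If $\angle V_0OV_1\ge \pi/2$, then also $\angle V_2OV_3=\angle V_0OV_1\ge \pi/2$, so $O\in D_0\cap D_2$ and the edge $\{D_0,D_2\}$ is present. Otherwise $\angle V_0OV_1<\pi/2$ forces $\angle V_1OV_2=\pi-\angle V_0OV_1>\pi/2$, and by the vertical-angle identity $\angle V_3OV_0=\angle V_1OV_2>\pi/2$, so $O\in D_1\cap D_3$ and the edge $\{D_1,D_3\}$ is present. Either way $G(P)$ has at least five edges (four degree-two cannot contain five edges on four vertices), so $G(P)\not\simeq C_4$, completing the proof. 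I do not expect a serious obstacle here: the argument is elementary once the diagonal intersection $O$ is chosen as the witness, and the only point requiring a word of care is the boundary case in which $O$ lies exactly on a disk's circle (angle exactly $\pi/2$), which still yields a nonempty intersection because the disks are closed. The genuinely ``clever'' step is simply recognizing that $O$ must lie in one of the two opposite pairs of disks, a fact forced by the supplementarity $\angle V_0OV_1+\angle V_1OV_2=\pi$.
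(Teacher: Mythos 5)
Your proof is correct, but it proceeds by a genuinely different argument than the paper's. Both proofs make the same initial reduction: a realizing polygon must be a convex quadrilateral, the four ``cycle'' edges are always present, so it suffices to show $D_0\cap D_2\neq\emptyset$ or $D_1\cap D_3\neq\emptyset$. From there the paper argues metrically: writing $x$ and $y$ for the distances between the midpoints of the two pairs of opposite sides, it observes that $D_0\cap D_2\neq\emptyset$ iff $|s_0|+|s_2|\ge 2y$ (sum of radii versus distance between centers), cites the quadrilateral inequality $|s_0|+|s_2|\ge 2x$, $|s_1|+|s_3|\ge 2y$ from Bottema's \emph{Geometric Inequalities}, sums the two, and concludes by an averaging argument that at least one of the required inequalities must hold. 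You instead produce an explicit witness: the intersection point $O$ of the two diagonals, which by the inscribed-angle (Thales) criterion lies in $D_i$ exactly when it sees side $s_i$ under an angle at least $\pi/2$; vertical angles and supplementarity of adjacent angles around $O$ then force $O$ to lie in both disks of one opposite pair. Your case analysis is exhaustive and correct (including the boundary case of angle exactly $\pi/2$, since the disks are closed), and your observation that $O$ is interior to $P$ by convexity, hence distinct from the vertices, keeps the angles well defined. What each approach buys: yours is entirely self-contained and elementary, needing no external citation, and it pinpoints a concrete common point of the two intersecting disks; the paper's is shorter modulo the cited inequality and yields, as a byproduct, an exact characterization (an ``if and only if'' in terms of side lengths and midpoint distances) of when each diagonal edge is present. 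One cosmetic remark: your closing parenthetical ``four degree-two cannot contain five edges on four vertices'' is garbled --- the clean statement is simply that $C_4$ has exactly four edges, so a graph with five or more edges cannot be isomorphic to it.
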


\begin{proof}
It suffices to prove that in any (convex) quadrilateral of sides denoted $s_0$, $s_1$, $s_2$, and $s_3$ counter-clockwise, $D_0\cap D_2\neq\emptyset$ or $D_1\cap D_3\neq\emptyset$. Let $x$ be the distance between the midpoints of $s_1$ and $s_3$, and $y$ be the distance between the midpoints of $s_0$ and $s_2$. Observe that $D_0\cap D_2\neq\emptyset$ if and only if $|s_0|+|s_2|\ge 2y$. Similarly, $D_1\cap D_3\neq\emptyset$ if and only if $|s_1|+|s_3|\ge 2x$. Since in any quadrilateral the total lenght of two opposite sides is at least twice the distance between the midpoints of the other two sides~\cite[Innequality 15.10]{bottema1969geometric}, we have $|s_0|+|s_2|\ge 2x$ and $|s_1|+|s_3|\ge 2y$. That is, $|s_0|+|s_2|+|s_1|+|s_3|\ge 2x+2y$. Hence, we must have $|s_0|+|s_2|\ge 2y$ or $|s_1|+|s_3|\ge 2x$, which implies the result.\qed
\end{proof}

\begin{lemma}\label{lem:b}
Let points $A$, $B$, and $C$ be the vertices of an isosceles triangle of base $AC$, so that the interior angle at $B$ is at least $\pi/3$. Let $A'\in\ell(A,B)$ and $C'\in\ell(B,C)$ be points such that $A$ belongs to the interior of $A'B$, and $C$ belongs to the interior of $BC'$. Let $k\ge 2$ be an integer, and for every $i\in[0\ldotp\ldotp k]$, let $Q_i$ denote the point of $AC$ such that $|CQ_i|=(i/k)\cdot|AC|$. Then, $D_{Q_iQ_{i+1}}\cap D_{AA'}=\emptyset$ for every $i\in[0\ldotp\ldotp k-2]$, and $D_{Q_iQ_{i+1}}\cap D_{CC'}=\emptyset$ for every $i\in[1\ldotp\ldotp k-1]$.
\end{lemma}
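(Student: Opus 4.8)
The plan is to exploit the fact that, although the disk $D_{AA'}$ can be arbitrarily large (the point $A'$ may lie arbitrarily far from $A$), it is always trapped in one fixed half-plane. Let $\ell$ be the line through $A$ perpendicular to $\ell(A,B)$, and let $H$ be the closed half-plane bounded by $\ell$ that does not contain $B$. Since $A$ is an endpoint of the diameter $AA'$ and $\ell\perp AA'$ at $A$, the disk $D_{AA'}$ is tangent to $\ell$ at $A$ and lies entirely in $H$ (the disk sits on the side of its tangent at $A$ toward its center, which is on the $A'$-side, i.e.\ the non-$B$-side). Hence, to prove $D_{Q_iQ_{i+1}}\cap D_{AA'}=\emptyset$ it suffices to show that $D_{Q_iQ_{i+1}}$ lies strictly in the open half-plane on the $B$-side of $\ell$, a statement that no longer mentions $A'$.

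First I would record the elementary metric data. All segments $Q_iQ_{i+1}$ have length $|AC|/k$, so $D_{Q_iQ_{i+1}}$ has radius $\rho=|AC|/(2k)$ and its center $N_i$ (the midpoint of $Q_iQ_{i+1}$) lies on $AC$ at distance $|AN_i|=\frac{2k-2i-1}{2k}\,|AC|$ from $A$. Because $\ell$ passes through $A$ orthogonally to $\ell(A,B)$, the distance from any point $X$ of the segment $AC$ to $\ell$ equals $|AX|\cos\angle XAB$, and for such $X$ the angle $\angle XAB$ is exactly the base angle $\angle CAB$; moreover the projection is positive, placing $N_i$ on the $B$-side of $\ell$. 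Thus $D_{Q_iQ_{i+1}}$ lies strictly on the $B$-side of $\ell$ precisely when $\mathrm{dist}(N_i,\ell)=|AN_i|\cos\angle CAB>\rho$, that is, when $(2k-2i-1)\cos\angle CAB>1$.

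Now I would close this inequality using the hypotheses. Since the apex angle at $B$ is at least $\pi/3$ and the triangle is isosceles with base $AC$, the base angle satisfies $\angle CAB=\tfrac12(\pi-\angle ABC)\le \pi/3$, whence $\cos\angle CAB\ge\cos(\pi/3)=\tfrac12$. For every $i\in[0\ldotp\ldotp k-2]$ we have $2k-2i-1\ge 3$, so $(2k-2i-1)\cos\angle CAB\ge 3/2>1$, giving the strict separation and hence $D_{Q_iQ_{i+1}}\cap D_{AA'}=\emptyset$. This also explains why $i=k-1$ is excluded: there $2k-2i-1=1$, the factor drops to $\cos\angle CAB\le 1$, and indeed $D_{Q_{k-1}Q_k}$ shares the point $A$ with $D_{AA'}$.

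Finally, the claim for $D_{CC'}$ follows by symmetry. Reflecting across the axis of the isosceles triangle (the perpendicular bisector of $AC$, through $B$) swaps $A\leftrightarrow C$, $A'\leftrightarrow C'$ and $Q_i\leftrightarrow Q_{k-i}$, so it carries $D_{Q_iQ_{i+1}}\cap D_{AA'}=\emptyset$ for $i\in[0\ldotp\ldotp k-2]$ into $D_{Q_jQ_{j+1}}\cap D_{CC'}=\emptyset$ for $j\in[1\ldotp\ldotp k-1]$, which is exactly the second assertion. The only genuinely non-routine step is the opening observation that $D_{AA'}$, despite becoming unbounded as $A'$ recedes, is confined to the fixed half-plane $H$; once that reduction is made, the proof is a one-line trigonometric estimate driven by the $\pi/3$ hypothesis.
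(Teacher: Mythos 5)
Your proof is correct and takes essentially the same approach as the paper's: both separate $D_{AA'}$ from the small disks by the line through $A$ perpendicular to $\ell(A,B)$, then show the distance from the midpoint of $Q_iQ_{i+1}$ to that line exceeds the radius $|AC|/(2k)$, using the $\pi/3$ hypothesis to bound the relevant trigonometric factor below by $1/2$ (your $\cos\angle CAB$ is the paper's $\sin(\beta/2)$, as the base angle is $\pi/2-\beta/2$). The only cosmetic differences are that you make the tangency of $D_{AA'}$ to the separating line explicit and dispatch the $D_{CC'}$ case by symmetry, where the paper simply calls it analogous.
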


\begin{figure}[t]
	\centering
	\includegraphics[scale=0.9,page=11]{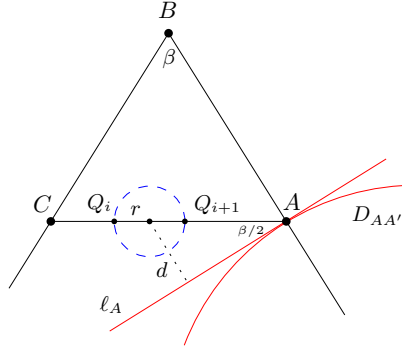}
	\caption{\small{Proof of Lemma~\ref{lem:b}.}}
	\label{fig:lem-b}
\end{figure}

\begin{proof}
We will prove the first part of the lemma. The proof for the second one is analogous. Let $\ell_A$ be the line through $A$ and perpendicular to $\ell(A,B)$. Note that $\ell_A$ separates the points $Q_i$, $i\in[0\ldotp\ldotp k-1]$, from the disk $D_{AA'}$ (see Figure~\ref{fig:lem-b}). Let $b=|AC|$, $\beta=\angle ABC\ge \pi/3$, and $i\in[0\ldotp\ldotp k-2]$. Note that the radius $r$ of $D_{Q_iQ_{i+1}}$ equals $b/(2k)$, and the distance $d$ from the midpoint of $Q_iQ_{i+1}$ to $\ell_A$ equals
\begin{align*} 
(b - |CQ_i|-r) \cdot \sin(\beta/2) & ~=~  \left( b - \frac{ib}{k} - r \right) \cdot \sin(\beta/2) \\
    & ~\ge~ \frac{1}{2}\left( b - \frac{(k-2)b}{k} - r \right) \\
    & ~=~  \frac{1}{2}\left(\frac{2b}{k}-r \right)
            ~=~ \frac{3}{2}r ~>~ r.
\end{align*}
Since $d>r$, the line $\ell_A$ separates $D_{Q_iQ_{i+1}}$ from the interior of $D_{AA'}$. This implies $D_{Q_iQ_{i+1}}\cap D_{AA'}=\emptyset$ for every $i\in[0\ldotp\ldotp k-2]$.\qed
\end{proof}

\begin{theorem}\label{theo:outerplanar}
Every outerplanar Hamiltonian graph, except $C_4$, is side-disk realizable. 
\end{theorem}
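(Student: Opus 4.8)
The plan is to argue by induction on the number $n$ of vertices, exploiting the structural description of outerplanar Hamiltonian graphs: after fixing a Hamiltonian cycle $v_0 v_1 \cdots v_{n-1}$, every remaining edge is a chord, and outerplanarity forces these chords to be pairwise non-crossing. Thus $G$ is a convex $n$-gon (the cycle) together with a non-crossing family of chords whose bounded faces form a tree. A realization must make consecutive side disks intersect (reproducing the cycle) and must make $D_i,D_j$ intersect for non-consecutive $i,j$ exactly when $\{v_i,v_j\}$ is a chord. Lemma~\ref{lem:b} is the engine: its flattened-triangle gadget lets me replace a convex corner by a subdivided base so that the base disks form a path meeting only its two immediate neighbors, which is precisely the behavior of a chain of degree-$2$ vertices.

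For the base cases, when $n=3$ the only graph is $K_3=C_3$, realized by any triangle. For the pure cycle $C_n$ with $n\ge 5$ I would exhibit one realization (for instance the regular $n$-gon, where only consecutive side disks meet); the case $C_4$ is the excluded one, and is genuinely unrealizable by Lemma~\ref{lem:not-C4}. The remaining graph on four vertices, the diamond $C_4+e$, is realized by an explicit non-square quadrilateral in which exactly one pair of opposite disks meets, which is possible by tuning the side lengths against the inequality used in Lemma~\ref{lem:not-C4}.

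For the inductive step, if $G$ is a cycle I use the base construction; otherwise $G$ has a chord, and I pick an \emph{ear} chord $\{v_a,v_b\}$ bounding a face all of whose interior arc vertices $v_{a+1},\dots,v_{b-1}$ have degree $2$. Contracting that arc to the single edge $\{v_a,v_b\}$ yields a smaller outerplanar Hamiltonian graph $G'$, realized by induction as a polygon $P'$ in which $v_a,v_b$ are consecutive sides $s_a,s_b$ sharing a corner $W$. I then apply Lemma~\ref{lem:b} with apex $B=W$: I cut the tips of $s_a,s_b$ at points $A,C$ close to $W$ (taking $|WA|=|WC|$ so the small triangle is isosceles with base $AC$) and insert a flattened, equally subdivided base between $A$ and $C$ with $b-a$ pieces, so the base disks realize the degree-$2$ path $v_{a+1},\dots,v_{b-1}$, with $A'$ and $C'$ identified with the far endpoints of $s_a$ and $s_b$. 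Lemma~\ref{lem:b} guarantees that these base disks meet only their immediate neighbors $D_{AA'}=D_{s_a}$ and $D_{CC'}=D_{s_b}$; since the tips are cut only slightly, $D_{s_a}$ and $D_{s_b}$ are essentially the old side disks and still intersect, reproducing the chord $\{v_a,v_b\}$, and by taking the inserted triangle small the base disks stay localized near $W$, so the intersection graph becomes exactly $G$.

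Two points will be the real work. First, the locality and stability argument: I must show that the small corner modification neither destroys an old intersection nor creates a spurious one among the remaining disks, which I would phrase through the \emph{proper} condition together with a continuity and smallness estimate (small perturbations preserve the intersection pattern of a proper disk family), combined with Lemma~\ref{lem:b}. Second, and most delicate, I must never be forced to realize $C_4$, since this is precisely the unrealizable graph: the reduction $G\to G'$ can in principle return $C_4$. I expect to handle this by showing that a non-cycle $G$ with $n\ge 5$ always admits an ear whose contraction avoids $C_4$ (when one arc of a chord collapses to a $4$-cycle, the chord's other arc supplies an alternative reduction), with the finitely many genuinely small exceptional graphs realized directly. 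Finally, ensuring Lemma~\ref{lem:b}'s hypothesis $\angle ABC\ge \pi/3$ at the chosen corner $W$ is a condition I would build into the induction hypothesis, realizing $G'$ with a non-sharp interior angle at the edge to be expanded.
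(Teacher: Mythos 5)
Your plan follows, in substance, the same route as the paper's proof: the paper inducts on the number of internal faces rather than on vertices, but its inductive step removes exactly your ear (a chord bounding a face whose remaining vertices all have degree $2$), realizes the smaller graph by induction, cuts the corner $W$ shared by the two sides realizing the chord's endpoints, and inserts the equally subdivided flattened base controlled by Lemma~\ref{lem:b}. Your resolution of the $C_4$ obstruction is also the paper's: when one arc of a chord would collapse to a $4$-cycle, reduce along the other arc, and the only graph for which every reduction fails is the six-vertex graph made of two quadrilateral faces sharing an edge, which the paper realizes directly by an explicit hexagon as a second base case; so your ``finitely many exceptional graphs'' is in fact exactly one graph.

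The genuine gap is in your locality/stability step. Your inductive invariant---properness plus an interior angle at least $\pi/3$ at the corner to be expanded---is too weak to make ``take the inserted gadget small enough'' work. Properness and perturbation do preserve the pattern among the old disks, and Lemma~\ref{lem:b} controls the new base disks against $D_{AA'}$ and $D_{CC'}$, but nothing you assume prevents a side disk $D_e$ far along the polygon from containing the corner $W$ itself, which happens exactly when the side $e$ subtends an angle at least $\pi/2$ at $W$. This is compatible with all of your hypotheses: the convex quadrilateral with vertices $(0,0)$, $(1,0)$, $(1.5,0.5)$, $W=(0.5,0.4)$ has a proper disk family and interior angle about $147^\circ$ at $W$, yet $W$ lies strictly inside the disk of the side joining $(0,0)$ and $(1,0)$; it is even a quadrilateral of the kind you describe for your diamond base case (exactly one pair of opposite disks meets). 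If $W\in D_e$, then every inserted base disk intersects $D_e$ no matter how small the gadget is, creating edges from the degree-$2$ path vertices to $v_e$ that are absent from $G$, and the inductive step fails. The paper closes this hole by strengthening the invariant to its notion of a \emph{good} polygon, which additionally requires that every vertex of the realizing polygon be contained only in the side disks of its two incident sides; this yields a positive distance from $W$ to every non-incident disk, after which your smallness argument---together with a verification that the newly built polygon is again good---goes through.
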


\begin{proof}
We will prove the theorem by induction in the number of faces of the planar representation of the graph. More precisely, we prove that: For every outerplanar Hamiltonian graph $G$ different from $C_4$, there exists a convex polygon $P$ such that $G\simeq G(P)$, and $P$ is good. We say that a convex polygon $P$ is {\em good} if $\mathcal{D}(P)$ is proper, every vertex of $P$ is contained only in the side disks of the sides of $P$ adjacent to the vertex, and all interior angles of $P$ are at least $\pi/3$.

The idea behind the induction is to start with realizing an internal face, or two neighboring internal faces, of the planar representation of the graph as a side-disk intersection graph. Then, to incorporate to the realization a new internal face of the planar representation that is a neighbor of a face already realized; until all internal faces have been incoporated. If there are internal faces of the planar embedding with 3 or at least 5 vertices, then we start with any of them. Otherwise, if all internal faces have exactly 4 vertices, then there are at least two of them because the graph is not $C_4$, and we start with realizing any two neighboring internal faces.

Let $G=(V,E)$ be an outerplanar Hamiltonian graph different from $C_4$. We have two base cases in our induction: (1) $G$ is a cycle. Then $|V|=3$ or $|V|\ge 5$, and any regular polygon $P$ of $|V|$ vertices satisfies the desired properties. Note that this is not true when $|V|=4$ by Lemma~\ref{lem:not-C4}. (2) $G$ has $3$ faces, one external of $|V|=6$ vertices, and two internals of four vertices each that share and edge of $G$. Then, the hexagon $P$ with vertex set $\{(-1,0),(0,-1),(1,0),(1,3),(0,4),(-1,3)\}$ satisfies the desired properties.

Consider now the general case, and let $f>2$ be the number of faces of the planar representation of $G$. Note that $G$ can be decomposed into two induced subgraphs $G_0$ and $G_1$, where $G_0$ is an outerplanar Hamiltonian graph of $f-1$ faces, and $G_1$ is a path $(v_0,v_1,\ldots,v_t,v_{t+1})$, $t\ge 1$, such that $v_0$ and $v_{t+1}$ are consecutive vertices of the outer face of $G_0$, and $v_1,\ldots,v_t$ are not vertices of $G_0$. Using the inductive hypothesis, let $P_0$ be a convex polygon such that $G(P_0)\simeq G_0$ and $P_0$ is good. Let $\mathcal{V}_0$ denote the vertex set of $P_0$. Let $g$ be an isomorphism between $G(P_0)$ and $G_0$, and let $A',B,C'$ be three consecutive vertices of $P_0$ such that $g(v_0)=D_{BC'}$ and $g(v_{t+1})=D_{A'B}$.

Since $\mathcal{D}(P_0)$ is good, we can select points $A$ and $C$ in the interiors of $A'B$ and $BC'$, respectively, such that: $|AB|=|BC|$, and $A$ and $C$ are close enough to $B$ so that the polygon $P'_0$ with vertex set $(\mathcal{V}_0\setminus \{B\})\cup\{A,C\}$ is good; $G(\mathcal{D}(P'_0)\setminus\{D_{AC}\})\simeq G_0$ for some isomorphism $h$ such that $h(D_{AA'})=D_{A'B}$ and $h(D_{CC'})=D_{BC'}$; and the side disk $D_{AC}$ of $P'_0$ only intersects its two neighbor side disks $D_{AA'}$ and $D_{CC'}$ (refer to Figure~\ref{fig:outerplanar}). If $t=1$, note that $P=P'_0$ satisfies $G(P)\simeq G$. Otherwise, if $t\ge 2$, we proceed as follows: Let $Q_1,Q_2,\ldots,Q_{t-1}\in AC$ be $t-1$ points that split the segment $AC$ into $t$ segments of equal length, and they appear in this order from $C$ to $A$. Let $Q_0=C$ and $Q_t=A$. By Lemma~\ref{lem:b}, we have that the disk sets $\{D_{Q_iQ_{i+1}}\mid i\in[1\ldotp\ldotp t-1]\}$ and $\mathcal{D}(P'_0)\setminus\{D_{AA'},D_{AC},D_{CC'}\}$ are independent. Furthermore, the intersection graph of $\{D_{C'C},D_{Q_0Q_1},\ldots,D_{Q_{t-1}Q_t},D_{AA'}\}$ is a path of length $t+1$, connecting $D_{CC'}=h^{-1}(D_{BC'})$ with $D_{AA'}=h^{-1}(D_{A'B})$, then isomorphic to $G_1$. This implies that for $\mathcal{D}=(\mathcal{D}(P'_0)\setminus\{D_{AC}\})\cup \{D_{Q_0Q_1},\ldots,D_{Q_{t-1}Q_t}\}$ we have $G(\mathcal{D})\simeq G$. Now, observe that there exist points $R_1,R_2,\ldots,R_{t-1}$ close enough to $Q_1,Q_2,\ldots,Q_{t-1}$, respectively, so that the point set $(\mathcal{V}_0\setminus \{B\})\cup\{A,C,Q_1,\ldots,Q_{t-1}\}$ is the vertex set of a good convex polygon $P$ that ensures $G(P)\simeq G(\mathcal{D})\simeq G$. The theorem thus follows.\qed
\end{proof}

\begin{figure}[t]
	\centering
	\includegraphics[scale=0.75,page=10]{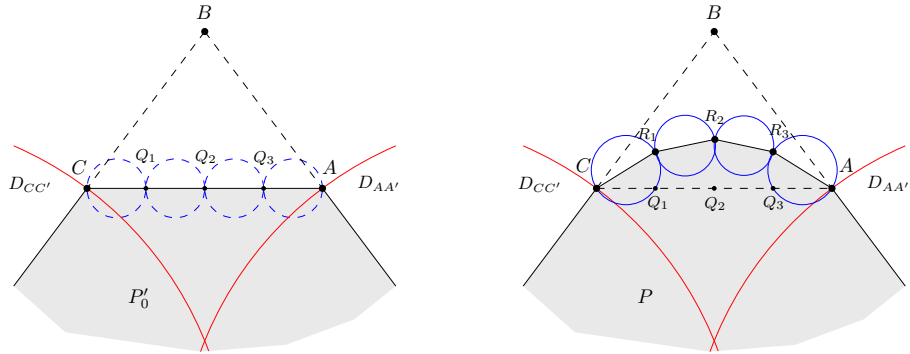}
	\caption{\small{Proof of Theorem~\ref{theo:outerplanar}. Left: The points $Q_1,Q_2,Q_3$ for $t=4$. 
	Right: The points $R_1,R_2,R_3$ and the polygon $P$ such that $G(P)\simeq G$.}}
	\label{fig:outerplanar}
\end{figure}

\begin{theorem}
Let $\mathcal{G}$ be the class of the side-disk realizable graphs. Then, $\mathcal{G}$ contains all outerplanar Hamiltonian graphs except $C_4$, there are graphs in $\mathcal{G}$ which are not outerplanar, and $\mathcal{G}$ is a proper subclass of the planar Hamiltonian graphs.
\end{theorem}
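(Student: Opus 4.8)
The plan is to verify the three assertions in turn, each reducing to a result already established in the paper. The first assertion is immediate: Theorem~\ref{theo:outerplanar} states precisely that every outerplanar Hamiltonian graph other than $C_4$ is side-disk realizable, so all such graphs belong to $\mathcal{G}$ by definition.

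For the second assertion I would exhibit a single graph in $\mathcal{G}$ that fails to be outerplanar. Recall from the proof in Section~\ref{sec:size-MIS} that the quadrilateral $P_4$ with vertices at $(1,0)$, $(0,2)$, $(-1,0)$, and $(0,-1)$ satisfies $G(P_4)\simeq K_4$, so $K_4\in\mathcal{G}$. Since a graph is outerplanar if and only if it contains neither $K_4$ nor $K_{2,3}$ as a minor, and $K_4$ trivially contains $K_4$ as a minor, the graph $K_4$ is not outerplanar; this settles the claim.

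The third assertion has two directions. For the inclusion $\mathcal{G}\subseteq\{\text{planar Hamiltonian graphs}\}$, I would invoke the result of Huemer and P\'erez-Lantero~\cite{huemer2016}: for every convex polygon $P$ the graph $G(P)$ is planar, and it is Hamiltonian via the cycle $(D_0,D_1,\ldots,D_{n-1},D_0)$. Hence every graph in $\mathcal{G}$ is planar and Hamiltonian. For the properness of the inclusion, I would produce a planar Hamiltonian graph outside $\mathcal{G}$; the cleanest witness is $C_4$ itself, which is trivially planar and Hamiltonian yet is not side-disk realizable by Lemma~\ref{lem:not-C4}. An alternative infinite family of witnesses follows from the treewidth bound of Section~\ref{sec:treewidth}: since $tw(G(P))\le 3$ for every $P$, no planar Hamiltonian graph of treewidth exceeding $3$---such as a sufficiently large even $k\times k$ grid, which has treewidth $k$---can lie in $\mathcal{G}$.

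There is essentially no obstacle here, since the theorem is a synthesis of facts proved earlier: Theorem~\ref{theo:outerplanar}, the $K_4$-realization of Section~\ref{sec:size-MIS}, the planarity and Hamiltonicity of $G(P)$ from~\cite{huemer2016}, and Lemma~\ref{lem:not-C4}. The only point requiring a moment's care is justifying that the $K_4$ example really establishes non-outerplanarity, which amounts to citing the standard forbidden-minor characterization of outerplanar graphs. The theorem thus follows.
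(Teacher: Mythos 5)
Your proposal is correct, and it diverges from the paper's proof in two of the three parts in ways worth noting. For the non-outerplanarity claim, you take $K_4\in\mathcal{G}$ (via the quadrilateral $P_4$ from Section~\ref{sec:size-MIS}) and apply the forbidden-minor characterization of outerplanar graphs directly, whereas the paper argues through treewidth: a graph with a $K_4$ minor has treewidth $3$, while outerplanar graphs have treewidth $2$. Both are valid; yours is the more elementary citation, the paper's keeps the argument inside the treewidth framework it has already built. The more interesting difference is in the properness of the inclusion in the planar Hamiltonian graphs: you use $C_4$ as the witness, via Lemma~\ref{lem:not-C4}, which is shorter and perfectly sound for the theorem as stated (with closed disks). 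The paper instead uses the even $k\times k$ grid graph, $k\ge 4$, whose treewidth $k>3$ excludes it from $\mathcal{G}$ by Theorem~\ref{theo:tree}. The paper's choice buys two things your witness does not: it exhibits an infinite family showing that planar Hamiltonian graphs of unbounded treewidth lie outside $\mathcal{G}$, and it remains a valid witness in the open-disk variant discussed at the end of the paper, where $C_4$ becomes side-disk realizable by a square and your witness would evaporate. Also, you make explicit the inclusion $\mathcal{G}\subseteq\{\text{planar Hamiltonian graphs}\}$ by citing planarity and the Hamiltonian cycle $(D_0,D_1,\ldots,D_{n-1},D_0)$ from~\cite{huemer2016}; the paper leaves this step implicit, so your write-up is in this respect slightly more complete.
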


\begin{proof}
Theorem~\ref{theo:outerplanar} implies that $\mathcal{G}$ contains all outerplanar Hamiltonian graphs except the cycle graph $C_4$. To see that there are graphs in $\mathcal{G}$ which are not outerplanar, consider any graph of $\mathcal{G}$  with $K_4$ as a minor (see for example Section~\ref{sec:size-MIS}, or Figure~\ref{fig:example}). Such a graph has treewidth 3, whereas all outerplanar graphs have treewidth 2. Finally, to see that $\mathcal{G}$ is a proper subclass of the planar Hamiltonian graphs, consider again the $k\times k$ grid graph of $n=k^2$ vertices, with $k\ge 4$ and even, which is both planar and Hamiltonian. Since its treewidth is $k=\sqrt{n}>3$, it does not belong to $\mathcal{G}$.\qed
\end{proof}

From the beggining of the study of the side-disk realizable graphs, also in this paper, the side disks have been considered as closed disks~\cite{huemer2016}. If we consider the side disks as open disks, the intersection graph of them is also planar and Hamiltonian. Furthermore, the cycle graph $C_4$ is side-disk realizable by a square, and all outerplanar Hamiltonian graphs are then side-disk realizable. To note this last statement, observe that in the realization constructed in the proof of Theorem~\ref{theo:outerplanar} all pairs of intersecting disks have a proper intersection, that is, the open versions of the disks have a non-empty intersection. Hence, in this case the class of the side-disk realizable graphs is a proper superclass of the outerplanar Hamiltonian graphs, and a proper subclass of the planar Hamiltonian graphs.

\small

\section*{Acknowledgements}

We wish to thank the {\tt GeoGebra} open source software and its developers~\cite{gg}. This research has been supported by projects CONICYT FONDECYT/Regular 1160543 (Chile), and Millennium Nucleus Information and Coordination in Networks ICM/FIC RC130003 (Chile).

\small

\bibliographystyle{abbrv}
\bibliography{refs}

\end{document}